\newtheorem{theorem}{Theorem}
\newtheorem{lemma}{Lemma}
\newtheorem{exmp}{Example}
\begin{document}
\title{ Device-independent randomness based on a tight upper bound of the maximal quantum value of chained inequality}
\author{Youwang Xiao$^1$}
\author{XinHui Li$^2$}
\author{Jing Wang$^1$}
\author{Ming Li$^{1}$\footnote{E-mail: liming@upc.edu.cn.}}
\author{Shao-Ming Fei$^3$}
\affiliation{$^1$College of the Science, China University of
Petroleum, 266580, Qingdao, China\\
$^2$ National Laboratory of Solid State Microstructures, School of Physics,
and Collaborative Innovation Center of Advanced Microstructure, Nanjing University, 210093, Nanjing, China\\
$^3$School of Mathematical Science, Capital Normal University, 100048, Beijing, China}
\date{\today}
\begin{abstract}
The violation of Bell inequality not only provides the most radical departure of quantum theory from classical concepts, but also paves the way of applications in such as device independent randomness certification. Here, we derive the tight upper bound of the maximum quantum value for chained Bell inequality with arbitrary number of measurements on each party. The constraints where the upper bound saturates are also presented. This method provides us the necessary and sufficient conditions for some quantum states to violate the chained Bell inequality with arbitrary number of measurements. Based on the tight upper bound we present the lower bounds on the device independent randomness with respect to the Werner states. In particular, we present lower bounds on the randomness generation rates of chained Bell inequality for different number of measurements, which are compared with the family of Bell inequalities proposed by Wooltorton et al. [Phys. Rev. Lett. 129, 150403 (2022)].  Our results show that chained Bell inequality with three measurements has certain advantages at a low level of noise and could be used to improve randomness generation rates in practice.
\end{abstract}

\pacs{03.67.Mn,03.65.Ud}

\keywords{Chained Bell inequality; Tight upper bound; Device-independent randomness.}

\maketitle
\smallskip
\section{\leftline{1. Introduction}}
Bell inequalities must be satisfied by any local and realistic theory \cite{ref1}. The impossibility of local hidden variable (LHV) theories in quantum mechanics is witnessed by the violation of a Bell inequality. The Bell inequalities play a central role for device independent quantum information tasks such as quantum key distribution \cite{ref2,ref3,ref4}, randomness certification \cite{ref5,ref6}, self-testing \cite{ref7,ref8} and communication complexity \cite{ref9}.

Bell inequalities can be characterized by the number of parties $k$, the number of measurement settings $n$, and the number of possible outcomes for each measurement $m$. The well-known Clauser-Horne-Shimony-Holt (CHSH) inequality \cite{ref10} is for the case $k=n=m=2$. The chained Bell inequalities \cite{ref11} are the generalization of the CHSH inequality.
Interestingly, the non-locality and entanglement may behavior differently in certain scenarios \cite{ref12} in terms of the chained Bell inequalities. The chained Bell inequalities also give rise to the nonlocal restriction on statistics of measurement outcomes \cite{ref13}. Self-testing protocols have been presented in \cite{ref14} based on the chained Bell inequalities. It has been shown that the quantum state and the measurements which give rise to the maximal violation of the chained Bell inequalities are unique. The chained Bell inequalities have been also implemented to observe the violation experimentally for $n\geq 3$ without detection loophole \cite{ref15}.

Randomness is an essential resource in many application from numerical simulations to cryptography. The violation of Bell inequalities can certify private randomness that is guaranteed to be uncorrelated to any outside process or variable in a device-independent(DI) way \cite{ref5}. In last few years, DI technologies have been studied intensively and device-independent quantum random-number generator(DI QRNG) has attracted huge attention among them, which is able to access randomness by observing the violation of Bell inequalities without any assumptions about the source and measurement device \cite{ref5,ref16}. The randomness of the
output pairs conditioned on the input pairs for the standard  nonlocality-based scenario can be quantified by the min-entropy, where upper bounds on the guessing probabilities can be computed using the Navascu\'{e}s-Pironio-Ac\'{i}n(NPA) hierarchy for quantum
correlations \cite{ref18}. Thus, for any given Bell inequality one can obtain  lower bounds on the min-entropy  as a function of the amount of Bell inequality violation. However, the complete measurement statistics produced in a Bell experiment can offer much more information than merely resorting to the degree of violation of a single Bell inequality \cite{ref19,ref20}. It is worth pointing out that the DI randomness bounds using specific Bell operators are always lower than using the complete measurement statistics\cite{ref21}. Recently, a simple adaptation of the original NPA hierarchy for sequential Bell scenarios is provided and one can robustly certify over 2.3 bits of device-independent local randomness from a two-quibt state using a sequence of measurements, going beyond the theoretical maximum of two bits that can be achieved with non-sequential measurements\cite{ref22}.

For a given Bell inequality, it is of significance to estimate the maximal violation for given quantum states. In this paper, we derive a tight upper bound on the maximal quantum violation of the chained Bell inequality with arbitrary number of measurements for two-qubit systems. Then, we provide the constraints on the quantum state for the tightness of the upper bound. In addition, we present the optimal violation saturating the upper bound for several quantum states, including the Werner state. Then, based on the tight upper bound we present the lower bounds on the device independent randomness with respect to the Werner states. Moreover, we explore the robustness of chained Bell inequality under a Werner state noise model and compare it  to that of the family of Bell inequalities $J_{\gamma }$ proposed in Ref.\cite{ref6}. The results show that chained Bell inequality with $n=3$ has a better performance for a low noise and is useful for practical DI randomness generation.

\section{2. Tight upper bound on the maximal quantum value of chained inequality}
Let $A_{i}=\sum_{k=1}^{3}a_{ik}\sigma _{k}$ and $B_{j}=\sum_{l=1}^{3}b_{jl}\sigma _{l}$
be Hermitian operators with eigenvalues $\pm 1$, $i,j=1,...,n$, where
$\sigma _{1}=\bigl(\begin{smallmatrix}
0 &1 \\
 1& 0
\end{smallmatrix}\bigr),~\sigma _{2}=\bigl(\begin{smallmatrix}
0 &-i\\
 i& 0
\end{smallmatrix}\bigr),~\sigma _{3}=\bigl(\begin{smallmatrix}
1 &0\\
 0& -1
\end{smallmatrix}\bigr) $ are the standard Pauli matrices.
Let  $B_{C^{n}}$ denote the Bell operator for the chained Bell inequality with $n$ measurements on each party,
\begin{equation}\label{bicon1}
B_{C^{n}}=\sum_{k=1}^{n-1}A_{k}\otimes B_{k}+A_{k+1}\otimes B_{k}+A_{n}\otimes B_{n}-A_{1}\otimes B_{n}
\end{equation}
and $C^{n}$ denote the expectation value of the Bell operator,
\begin{equation}
C^{n}=\sum_{k=1}^{n-1}\left \langle A_{k}B_{k} \right \rangle+\left \langle A_{k+1}B_{k} \right \rangle+\left \langle A_{n}B_{n} \right \rangle-\left \langle A_{1}B_{n} \right \rangle,
\end{equation}
where $\left \langle A_{x}B_{y} \right \rangle=\sum _{ab}abp(a,b|x,y)$ is the expectation value of the measurement outcome $(a,b)\in \left \{ +1,-1 \right \}$ with respect to the inputs $x$ and $y$. The joint probabilities $p(ab|x,y)$ can be written as $p(ab|x,y)=\frac{1}{4}(1+a\left \langle A_{x} \right \rangle+b\left \langle B_{y} \right \rangle+ab\left \langle A_{x}B_{y} \right \rangle)$ in terms of the expectation values of the measurements $A_{x}$ and $B_{y}$.
The classical bound for LHV models and the maximum quantum value of $C^{n}$ are respectively
given by Ref. \cite{ref23},
\begin{equation}
C_{LHV}^{n}=2n-2,
\end{equation}
\begin{equation}
C_{Q}^{n}=2ncos\frac{\pi }{2n}.
\end{equation}

Before estimating the violation of the chained Bell inequality, we first prove the following lemma.

\begin{lemma}
Let $A$ be an $n\times n$ matrix and $A=U^{T}\Sigma V$ the singular value decomposition of $A$. Suppose the column vectors of the matrices $U$ and $V$ are $u_{1},...,u_{n}$ and $v_{1},...,v_{n}$ respectively. For any vectors $x=\sum x_{i}u_{i}$ and $y=\sum y_{i}v_{i}$, we have
\begin{equation}
\left | x^{T}Ay \right |\leq \sigma _{max}\left \| x \right \|\left \| y \right \|,
\end{equation}
where $\sigma _{max}$ is the largest singular value of the matrix $A$ and $\left \| \cdot  \right \|$ stands for $2$-norm. The equality holds if and only if $u_{i}$ and $v_{i}$ are the left and right singular vectors corresponding to the maximum singular value, $x_{i}\cdot y_{i}$ keep the same symbol, and $x_{i}$ and $y_{i}$ are proportional for $i=1,...,n$.
\end{lemma}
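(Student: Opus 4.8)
The plan is to diagonalize the bilinear form $x^{T}Ay$ through the singular value decomposition, turning it into a single weighted sum of products of expansion coefficients, and then to control that sum by a short chain of estimates ending in Cauchy--Schwarz. Writing $\Sigma=\mathrm{diag}(\sigma_{1},\dots,\sigma_{n})$ with $\sigma_{1}\ge\cdots\ge\sigma_{n}\ge 0$, and using that $\{u_{i}\}$ and $\{v_{i}\}$ are orthonormal systems linked by $Av_{j}=\sigma_{j}u_{j}$, one has $u_{i}^{T}Av_{j}=\sigma_{j}\delta_{ij}$. Hence, for $x=\sum_{i}x_{i}u_{i}$ and $y=\sum_{j}y_{j}v_{j}$,
\begin{equation}
x^{T}Ay=\sum_{i,j}x_{i}y_{j}\,u_{i}^{T}Av_{j}=\sum_{i=1}^{n}\sigma_{i}\,x_{i}y_{i},
\end{equation}
and by orthonormality $\|x\|^{2}=\sum_{i}x_{i}^{2}$, $\|y\|^{2}=\sum_{i}y_{i}^{2}$.

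Next I would estimate the scalar sum by three successive bounds,
\begin{equation}
\Bigl|\sum_{i}\sigma_{i}x_{i}y_{i}\Bigr|\;\le\;\sum_{i}\sigma_{i}|x_{i}|\,|y_{i}|\;\le\;\sigma_{\max}\sum_{i}|x_{i}|\,|y_{i}|\;\le\;\sigma_{\max}\Bigl(\sum_{i}x_{i}^{2}\Bigr)^{1/2}\Bigl(\sum_{i}y_{i}^{2}\Bigr)^{1/2},
\end{equation}
using in turn the triangle inequality (with $\sigma_{i}\ge 0$), the bound $\sigma_{i}\le\sigma_{\max}$, and the Cauchy--Schwarz inequality. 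Combined with the identities above this is exactly $|x^{T}Ay|\le\sigma_{\max}\|x\|\|y\|$.

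For the equality clause I would trace back each of the three inequalities separately. Equality in the first holds iff all nonzero terms $\sigma_{i}x_{i}y_{i}$ share one sign, i.e.\ the products $x_{i}y_{i}$ carry the same symbol; equality in the second holds iff $\sigma_{i}=\sigma_{\max}$ for every $i$ with $x_{i}y_{i}\neq 0$, i.e.\ the $u_{i},v_{i}$ that actually contribute are left and right singular vectors belonging to $\sigma_{\max}$; and equality in Cauchy--Schwarz holds iff $(|x_{i}|)_{i}$ and $(|y_{i}|)_{i}$ are proportional. The conjunction of these three conditions is the characterization in the statement, with the trivial cases $x=0$ or $y=0$ treated separately and, when $\sigma_{\max}$ is degenerate, ``the'' corresponding singular vector read as any element of the associated eigenspace.

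I do not expect a serious obstacle: the reduction via the SVD is routine, and the inequality is just a weighted Cauchy--Schwarz. The one place that needs genuine attention is the equality discussion---checking that the three saturation conditions are jointly consistent and together pin down $x$ and $y$ as claimed, including the bookkeeping in the degenerate-singular-value case.
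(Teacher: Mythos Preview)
Your approach is essentially the same as the paper's: diagonalize $x^{T}Ay$ via the SVD to obtain $\sum_{i}\sigma_{i}x_{i}y_{i}$, then apply the triangle inequality, the bound $\sigma_{i}\le\sigma_{\max}$, and Cauchy--Schwarz in succession, reading off the equality conditions from each step. Your treatment of the equality clause is in fact slightly more careful than the paper's (you correctly restrict attention to indices with $x_{i}y_{i}\neq 0$ and flag the degenerate case), but the underlying argument is identical.
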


\begin{proof}
According to singular value decomposition of $A$, we have
\begin{align}\label{proof of Lemma}
\notag\left | x^{T}Ay \right |=\left | (\sum x_{i}u_{i})^{T}A(\sum y_{j}v_{j})
                        \right |&\le\sum_{ij}\left | x_{i}y_{j}u_{i}^{T}Av_{j} \right |=\sum_{i}\left | x_{i}y_{i}\sigma _{i} \right |\\
\notag&\leq \sigma _{max}\sum_{i}\left | x_{i}y_{i} \right |\\
\notag&\leq \sigma _{max}\left \| x \right \|\left \| y \right \|.
\end{align}
The first inequality holds if $x_{i}\cdot y_{i}$ keep the same symbol for $i=1,...,n$. The second inequality holds if $u_{i}$ and $v_{i}$ are the left and right singular vectors corresponding to the maximum singular value. The third inequality holds if $x_{i}$ and $y_{i}$ are proportional for $i=1,...,n$.
\end{proof}

An arbitrary two-qubit state $\rho$ has the following Bloch representation,
\begin{equation}
\rho =\frac{1}{4}\left [ I\otimes I+\sum_{i}r_{i}\sigma _{i}\otimes I+I\otimes \sum_{i}s_{j}\sigma _{j}+\sum_{i,j}m_{ij}\sigma _{i}\otimes \sigma _{j} \right ],
\end{equation}
where $\boldsymbol r=(r_1, r_2, r_3)$ and $\boldsymbol s=(s_1, s_2, s_3)$ are three-dimensional real vectors and $M=(m_{ij})$ is the real correlation matrix.
The following theorem holds for any states $\rho$.

\begin{theorem}
The maximum quantum value {$Q_{C^{n}}(\rho )$} of the chained Bell inequality satisfies
\begin{equation}\label{thm1}
 {Q_{C^{n}}(\rho )\equiv \max_{ A_{i},B_{j}}\left | C^{n} (\rho) \right |\leq 2ncos\frac{\pi }{2n}\sigma _{max},}
\end{equation}
where  {$C^{n} (\rho) =Tr(B_{C^{n}}\rho )$}, the maximum is taken over all possible  { observables $A_{i},B_{j}$}, $\sigma _{max}$ is the maximum singular value of the correlation matrix $M=(m_{kl})$ with entries $m_{kl}=tr(\rho (\sigma _{k}\otimes\sigma _{l} ))$, $k,l=1,2,3$.
\end{theorem}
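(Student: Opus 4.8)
The plan is to rewrite the expectation value $C^{n}(\rho)$ as a bilinear form in two unit vectors built from the measurement directions, then apply the Lemma to bound it by $\sigma_{\max}$ times the norms of those vectors, and finally bound those norms by the classical $2n\cos\frac{\pi}{2n}$ factor. First I would write $A_{i}=\boldsymbol a_{i}\cdot\boldsymbol\sigma$ and $B_{j}=\boldsymbol b_{j}\cdot\boldsymbol\sigma$ with unit vectors $\boldsymbol a_{i},\boldsymbol b_{j}\in\mathbb R^{3}$ (the eigenvalue-$\pm1$ condition forces $\|\boldsymbol a_{i}\|=\|\boldsymbol b_{j}\|=1$). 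Using $\mathrm{tr}\big(\rho(\boldsymbol a\cdot\boldsymbol\sigma\otimes\boldsymbol b\cdot\boldsymbol\sigma)\big)=\boldsymbol a^{T}M\boldsymbol b$, the Bell expression becomes
\begin{equation}
C^{n}(\rho)=\sum_{k=1}^{n-1}(\boldsymbol a_{k}+\boldsymbol a_{k+1})^{T}M\boldsymbol b_{k}+(\boldsymbol a_{n}-\boldsymbol a_{1})^{T}M\boldsymbol b_{n}.
\end{equation}

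Next I would collect terms by the $B$-vectors. Setting $\boldsymbol\alpha_{k}=\boldsymbol a_{k}+\boldsymbol a_{k+1}$ for $k=1,\dots,n-1$ and $\boldsymbol\alpha_{n}=\boldsymbol a_{n}-\boldsymbol a_{1}$, we have $C^{n}(\rho)=\sum_{k=1}^{n}\boldsymbol\alpha_{k}^{T}M\boldsymbol b_{k}$. Applying the Cauchy–Schwarz-type estimate from the Lemma to each term, $|\boldsymbol\alpha_{k}^{T}M\boldsymbol b_{k}|\le\sigma_{\max}\|\boldsymbol\alpha_{k}\|\,\|\boldsymbol b_{k}\|=\sigma_{\max}\|\boldsymbol\alpha_{k}\|$, so
\begin{equation}
|C^{n}(\rho)|\le\sigma_{\max}\sum_{k=1}^{n}\|\boldsymbol\alpha_{k}\|.
\end{equation}
It then remains to show $\sum_{k=1}^{n}\|\boldsymbol\alpha_{k}\|\le 2n\cos\frac{\pi}{2n}$, i.e. to maximize $\sum_{k=1}^{n-1}\|\boldsymbol a_{k}+\boldsymbol a_{k+1}\|+\|\boldsymbol a_{n}-\boldsymbol a_{1}\|$ over unit vectors $\boldsymbol a_{1},\dots,\boldsymbol a_{n}$. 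Alternatively, and perhaps more cleanly, one can apply the Lemma in one shot: stack the $\boldsymbol\alpha_{k}$ and $\boldsymbol b_{k}$ into block vectors in $\mathbb R^{3n}$ and note the quadratic form is governed by a block-diagonal matrix with $M$ on each block, whose largest singular value is still $\sigma_{\max}$; the bound $\|\text{(stacked }\boldsymbol\alpha)\|\le\sqrt{\sum\|\boldsymbol\alpha_k\|^2}$ is weaker, so the term-by-term route above is the right one.

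The main obstacle is the purely geometric optimization $\max\sum_{k=1}^{n-1}\|\boldsymbol a_{k}+\boldsymbol a_{k+1}\|+\|\boldsymbol a_{n}-\boldsymbol a_{1}\|$. I expect the optimum to be attained with all $\boldsymbol a_{k}$ coplanar and equi-angularly spread: take $\boldsymbol a_{k}$ at angle $\theta_{k}$ in a fixed plane, so that $\|\boldsymbol a_{k}+\boldsymbol a_{k+1}\|=2|\cos\frac{\theta_{k}-\theta_{k+1}}{2}|$ and $\|\boldsymbol a_{n}-\boldsymbol a_{1}\|=2|\sin\frac{\theta_{n}-\theta_{1}}{2}|$; writing $\phi_k$ for the successive angular gaps, one maximizes $\sum 2\cos(\phi_k/2)$ subject to the gaps summing appropriately, and a Lagrange-multiplier / concavity (Jensen) argument forces all gaps equal to $\pi/n$, giving the value $2n\cos\frac{\pi}{2n}$. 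The subtlety is justifying the reduction to the planar, equiangular configuration — this is exactly the structure of the known tight bound $C_Q^n=2n\cos\frac{\pi}{2n}$ for the chained inequality (the case $\sigma_{\max}=1$, $M=\mathrm{diag}(\pm1,\dots)$), so I would either cite that optimization from Ref.~\cite{ref23} or reproduce it via the inequality $\|\boldsymbol u+\boldsymbol v\|\le 2\cos\frac{\gamma}{2}$ where $\gamma$ is the angle between $\boldsymbol u,\boldsymbol v$, combined with the triangle inequality on the sphere $\sum_{k=1}^{n-1}\gamma_k+\gamma_n'\ge$ (angle between $\boldsymbol a_1$ and $-\boldsymbol a_1$) $=\pi$ and Jensen applied to the concave function $\cos(\cdot/2)$ on $[0,\pi]$. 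Finally, tracking when each inequality is tight (the Lemma's equality condition plus the equiangular configuration) yields the saturation constraints on $\rho$ quoted after the theorem.
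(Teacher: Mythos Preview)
Your proposal is correct and shares the paper's overall architecture: write $\langle A_iB_j\rangle_\rho=\boldsymbol a_i^{T}M\boldsymbol b_j$, group terms so that $C^n(\rho)$ becomes a sum of $n$ bilinear forms, apply Lemma~1 termwise to extract $\sigma_{\max}$, and then solve the residual geometric optimization. The only inessential difference in the setup is that you group by the $B$-vectors (combining pairs of $\boldsymbol a_k$), whereas the paper groups by the $A$-vectors (combining pairs of $\boldsymbol b_k$); by symmetry of the operator this changes nothing.

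Where the two arguments genuinely diverge is in bounding $\sum_k\|\boldsymbol\alpha_k\|$. The paper inserts an extra Cauchy--Schwarz step, $\sum_k\|\boldsymbol\alpha_k\|\le\sqrt{n}\,\sqrt{\sum_k\|\boldsymbol\alpha_k\|^2}$, rewrites $\sum_k\|\boldsymbol\alpha_k\|^2=2n+2\big(\langle b_1,b_2\rangle+\cdots-\langle b_1,b_n\rangle\big)$, and then maximizes the inner-product sum via an explicit primal/dual SDP, obtaining $n\cos(\pi/n)$ and hence the bound $2n\cos\tfrac{\pi}{2n}$. Your route is more elementary: you bound $\sum_k\|\boldsymbol\alpha_k\|=\sum_k 2\cos(\gamma_k/2)$ directly using the spherical triangle inequality $\sum_k\gamma_k\ge\pi$ together with Jensen for the concave $\cos(\cdot/2)$ on $[0,\pi]$. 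This avoids the SDP machinery entirely and is arguably cleaner; the paper's approach, on the other hand, showcases a duality technique that generalizes to other Bell operators. Both yield identical saturation conditions (all $\gamma_k$ equal, which is the same as all $\|\boldsymbol\alpha_k\|$ equal).

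One small remark: the ``planar reduction'' you worry about is not actually needed for the upper bound. The spherical triangle inequality $\mathrm{angle}(u,w)\le\mathrm{angle}(u,v)+\mathrm{angle}(v,w)$ holds for unit vectors in any $\mathbb R^d$, so applying it along the chain $\boldsymbol a_1\to\boldsymbol a_2\to\cdots\to\boldsymbol a_n\to(-\boldsymbol a_1)$ already gives $\sum_k\gamma_k\ge\pi$ in $\mathbb R^3$, after which Jensen finishes the job without ever restricting to a plane.
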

\begin{proof}
First, note that
\begin{align}
\notag\left \langle A_{i}B_{j} \right \rangle_{\rho }&=tr(A_{i}\otimes B_{j}\rho )\\
\notag&=\sum_{k,l=1}^{3}a_{ik}b_{jl}m_{kl}\\
&=a_{i}^{T}Mb_{j}.
\end{align}
It follows from Lemma $1$ and Cauchy-Schwartz inequality that
\begin{align}\label{bicon2}
\notag { C^{n}(\rho)}&=a_{1}^{T}M(b_{1}-b_{n})+\sum_{k=1}^{n-1}a_{k+1}^{T}M(b_{k}+b_{k+1})\\
\notag&\leq \sigma _{max}\left [ \left | b_{1}-b_{n} \right |+\left |   b_{1}+b_{2}\right |+ ...+\left | b_{n-1}+b_{n}  \right |\right ]\\
&\leq \sigma _{max}\sqrt{n}\sqrt{2n+2(\left \langle b_{1},b_{2} \right \rangle+...+\left \langle b_{n-1},b_{n} \right \rangle-\left \langle b_{1},b_{n} \right \rangle)}.
\end{align}
To estimate the maximum value of the last formula above, we consider the following optimization problem: $\max_{\left \{ b_{i}\in R^{n} \right \}}(\left \langle b_{1},b_{2} \right \rangle+...+\left \langle b_{n-1},b_{n} \right \rangle-\left \langle b_{1},b_{n} \right \rangle)$.

Let $M=[m_{ij}]$ be the Gram matrix of the vectors $\left \{ b_{1},b_{2},...,b_{n} \right \}\subseteq R^{n}$ with respect to the inner product:
 \begin{equation}
 M=\begin{pmatrix}
\left \langle b_{1},b_{1} \right \rangle & \left \langle b_{1},b_{2} \right \rangle & \cdots  & \left \langle b_{1},b_{n} \right \rangle\\
 \left \langle b_{2},b_{1} \right \rangle& \left \langle b_{2},b_{2} \right \rangle & \cdots  &\left \langle b_{2},b_{n} \right \rangle \\
 \vdots &\vdots   & \cdots  & \vdots \\
\left \langle b_{n},b_{1} \right \rangle & \left \langle b_{n},b_{2} \right \rangle & \cdots  & \left \langle b_{n},b_{n} \right \rangle
\end{pmatrix},
 \end{equation}
where all $b_{i}$ are unit vectors and $M\succeq 0$. Define
 \begin{equation}
 W=\begin{pmatrix}
0 & 1 &  0 & \cdots  & 0 & -1\\
1 & 0 &  1 & \cdots  & 0 & 0\\
0 & 1 &  0 & \cdots  & 0 & 0\\
\vdots  & \vdots  & \vdots  & \cdots  & \vdots  & \vdots \\
0 & 0 &  0 & \cdots  & 0 & 1\\
-1 & 0 &  0 & \cdots  & 1 & 0
\end{pmatrix}.
 \end{equation}
We can rephrase the optimization problem as the following semi-definite programm (SDP):
 \begin{equation}
 \begin{aligned}
  & {\rm maximize}\quad \frac{1}{2}tr(MW)\\
  & {\rm subject~ to}\quad M\succeq 0,~m_{ii}=1~ \forall i.
  \end{aligned}
 \end{equation}
We then obtain the dual formulation of SDP
 \begin{equation}
 \begin{aligned}
  & {\rm minimize}\quad tr(diag(v))\\
  &{\rm subject~ to}\quad -\frac{1}{2}W+diag(v)\succeq 0.
  \end{aligned}
 \end{equation}


We choose unit vectors $b_{i}$ to be of the form \cite{ref23},
\begin{equation}
b_{i}=(cos(\varphi _{i}),sin(\varphi _{i}),0,\cdots ,0),
\end{equation}
where $\varphi_{i}=\frac{\pi }{2n}(2i-1))$. As $\varphi _{i+1} - \varphi _{i}=\frac{\pi }{n}$ we see that $\left \langle b_{i},b_{i+1} \right \rangle= {cos(\frac{\pi }{n})}$, $1\leq i\leq n-1$. The angle between $-b_{1}$ and $b_{n}$ is $\pi -\varphi _{n}=\frac{\pi }{n}$. Thus $-\left \langle b_{1},b_{n} \right \rangle=cos(\frac{\pi }{n})$. As a result, the value of primal problem is given by
\begin{equation}
p^{'}=ncos(\frac{\pi }{n}).
\end{equation}
The corresponding matrix ${M}'$ given the above vectors obviously satisfies the constraints. Hence, $p^{'}$ is a feasible solution of the primal problem.

Next taking the $n$-dimensional vector
\begin{equation}
{v}'=cos(\frac{\pi }{n})(1,\cdots ,1),
\end{equation}
we obtain the maximum eigenvalue $2cos(\frac{\pi }{n})$ of the matrix $W$. Noting that $W$ and $diag({v}')$ are both Hermitian, we have
\begin{equation}
 \begin{aligned}
  \lambda _{min}(-\frac{1}{2}W+diag({v}'))&\geq \lambda _{min}(-\frac{1}{2}W)+\lambda _{min}(diag({v}'))\\
  &=-\frac{1}{2}(2cos(\frac{\pi }{n}))+cos(\frac{\pi }{n})\\
  &=0,
  \end{aligned}
 \end{equation}
where $\lambda _{min}(M)$ is the minimum eigenvalue of matrix $M$. Thus $-\frac{1}{2}W+diag({v}')\succeq 0$ and $diag({v}')$ is a feasible solution to the dual problem. The value of the dual problem is
\begin{equation}
{d}'=ncos(\frac{\pi }{n}).
\end{equation}
Since ${d}'={p}'$, it is of strong duality, which implies that ${M}'$ and ${v}'$ are the optimal solutions of the primal and dual problems, respectively. Finally, considering $b_{i}$ as three-dimensional unit vectors, from formula \eqref{bicon2} we prove the Theorem 1.
\end{proof}

 {\emph{Tightness of the upper bound}. Let us now discuss the conditions where upper bound \eqref{thm1} saturates. We can derive from lemma 1 that the first inequality in Eq.\eqref{bicon2} saturates if $b_{1}-b_{n}, b_{1}+b_{2},...,b_{n-1}+b_{n}$ are the singular vectors corresponding to the maximum singular value $\sigma _{max}$ and $a_{1}=\frac{M(b_{1}-b_{n})}{|M(b_{1}-b_{n})|},a_{2}=\frac{M(b_{1}+b_{2})}{|M(b_{1}+b_{2})|},...,a_{n}=\frac{M(b_{n-1}+b_{n})}{|M(b_{n-1}+b_{n})|}$. According to Cauchy-Schwartz inequality, the second inequality in Eq.\eqref{bicon2} is saturated if $\left | b_{1}-b_{n} \right |=\left |   b_{1}+b_{2}\right |=...=\left | b_{n-1}+b_{n}  \right |$. In addition, condition $\left \langle b_{1},b_{2} \right \rangle+...+\left \langle b_{n-1},b_{n} \right \rangle-\left \langle b_{1},b_{n} \right \rangle=ncos\frac{\pi }{n}$ needs to be satisfied in order to reach the upper bound. By analyzing the conditions above, as long as the degeneracy of $\sigma _{max}$ of the correlation matrix for quantum state $\rho$ is more than one, the upper bound is always tight.}
\begin{exmp}
Consider the singlet state,
\begin{equation}\label{2}
|\psi _{-}\rangle=\frac{1}{\sqrt{2}}(|01\rangle-|10\rangle {)}.
\end{equation}
Choose the measurements for Alice and Bob given by
\begin{equation}\label{3}
a_{i}=(-sin\frac{i-1}{n}\pi ,0,-cos\frac{i-1}{n}\pi ),
\end{equation}
\begin{equation}\label{4}
b_{j}=(sin\frac{2j-1}{2n}\pi ,0,cos\frac{2j-1}{2n}\pi ),
\end{equation}
$i,j=1,...,n$, respectively. We obtain the maximum quantum non-locality for arbitrary number  $n$ of measurements,
\begin{equation}
 {Q_{C^{n}}(|\psi _{-}\rangle)}=2ncos\frac{\pi }{2n}.
\end{equation}
It can be verified that the largest singular value of the matrix $M$ corresponding to state $|\psi _{-}\rangle$ is 1, which shows that the upper bound in \eqref{thm1} is tight.
\end{exmp}

\begin{exmp}
Consider the Werner state,
\begin{equation}\label{5}
\rho (p)=p |\psi _{-}\rangle\langle\psi _{-}|+(1-p)\frac{I_{4}}{4},
\end{equation}
where $0\leq p\leq 1$, and $I_{4}$ is the identity matrix on $\mathbb{C}^{2}\otimes \mathbb{C}^{2}$. The correlation matrix of $\rho(p)$ is
$M=diag(-p,-p,-p)$. By choosing the measurement settings \eqref{3} and \eqref{4}, the upper bound also remains tight for the Werner state, namely,
\begin{equation}\label{E2}
 {Q_{C^{n}}(\rho (p))}=2npcos\frac{\pi }{2n}.
\end{equation}
\end{exmp}
The range of entanglement witnessed by the chained Bell inequality for the Werner state is $p\geq \frac{n-1}{ncos\frac{\pi }{2n}}(n\geq 2)$.
 {\begin{exmp}
Consider the following special X-states
\begin{align}
\rho(\nu ,l) =\frac{1}{4}\left [ I_{2}\otimes I_{2}+\nu \sigma _{x}\otimes \sigma _{y}+\nu \sigma _{y}\otimes \sigma _{x}+l \sigma _{z}\otimes \sigma _{z} \right ]
\end{align}
The state $\rho(\nu ,l)$ will be entangled  if
\begin{align}
1-2\nu +l> 0,1-2\nu -l< 0,0< l< 1
\end{align}
The singular value of the correlation matrix $$M=\begin{pmatrix}
0 & \nu  & 0\\
\nu  & 0 & 0\\
 0& 0 & l
\end{pmatrix}$$ are $\nu,\nu$ and $l$. Thus, adding another condition $\nu \geq l$, we can obtain  the optimal violation $2ncos\frac{\pi }{2n}\nu $ of $\rho(\nu ,l)$ for chained Bell inequality $C^{n}$.
\end{exmp}
 To show that the upper bound is tight and for the sake of simplicity, we can choose $\nu=\frac{3}{4}l+\frac{1}{4},\frac{2}{5}<\nu <1$ and $n=3$ and use particle swarm optimization algorithm to get numerically true maximum. The results are shown in Fig. 1. Except for additional points (due to the accuracy of the algorithm), the numerical results are basically consistent with the upper bound.}

\begin{figure}[http]
 \centering
 \setlength{\abovecaptionskip}{0.2cm}
  \includegraphics[width=11cm,height=8.5cm]{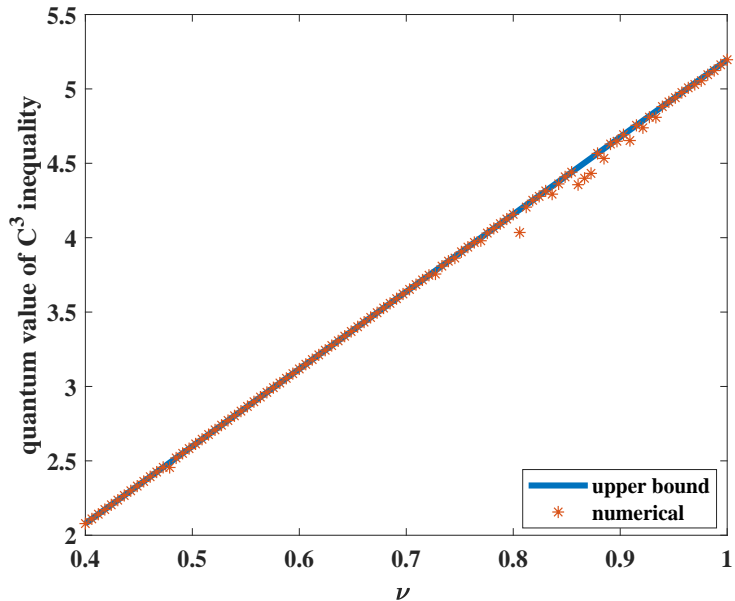}
  \caption{Comparison of numerically true maximum and upper bound of quantum value of $C^{3}$ inequality  for  X-states $\rho(\nu ,l)$ with $\nu$ from 0.4 to $1$.}
\end{figure}

\section{3. Randomness certification based on chained Bell inequality}

 { Quantum phenomena display intrinsic randomness and can thus serve as QRNG \cite{ref24}. Motivated by the device-independent techniques,  Bell inequality has been studied as a useful method to various quantum information processing tasks, such as DIQRNG which enables one to access true randomness without any assumptions on the source and measurement devices \cite{ref5}. }

Given a quantum distribution $P$, the randomness of the output pairs $(a,b)$ conditioned on the input pairs $(x,y)$ for the entangled pairs can be quantified by the min-entropy, which is defined as $H_{\infty}(A,B|x,y)=-log_{2}{\rm max}_{ab}\,p(ab|xy)$. To get  the min-entropy in device independent scenario, one needs to maximize the $p(ab|xy)$ running over all possible  quantum realizations $\{\rho,M_x^a,M_y^b\}$ compatible with $P$.  {However, optimization over the set of quantum realizations is computationally hard. To  resolve this technical difficulty,  the most of previous work on DIQRNG exploit the lower bound of min-entropy by the numerical method where combing NPA hierarchy and  semidefinite programming (SDP) \cite{ref18},  which was first proposed in Ref.\cite{ref5}.}

 {In this work, we explore the lower bounds for DIQRNG with two different kinds of constraint conditions on the SDP, they are the Bell inequality violation \cite{ref5} and the complete measurement statistics \cite{ref19,ref25}. }First, for the any observed violation $I$ of a given Bell inequality,
 we denote the maximum guessing probability $P^{\ast }(ab|xy)$ as the solution to
the following optimization problem:
\begin{align}
P^{\ast }(ab|xy)&=\max_{\left \{ \rho ,M_{x}^{a},M_{y}^{b}  \right \}}p(ab|xy)\\
  {\rm subject~ to }&\sum_{abxy}c_{abxy}P(ab|xy)=I\\
  &p(ab|xy)=tr(\rho M_{x}^{a}\otimes M_{y}^{b})
  \end{align}
  where the optimization is carried over all states $\rho$ and all measurement  operators $M_{x}^{a}$ and $M_{y}^{b}$, defined over Hilbert spaces of arbitrary dimension.
Resorting to SDP techniques with the NPA hierarchy, we can numerically bound the randomness based on  the amount of the Bell violation. That is, for any given Bell inequality, one can derive
bounds $H_\infty(A,B|x,y)\geq f(I)$ on the min-entropy.

Then, the DI global randomness based on the complete measurement statistics is defined by
 \begin{align} P_{\text{guess}}(A,B|E,x,y)&=\max_{\left \{  |\varphi \rangle,M_{a}^{x},M_{b}^{y},M_{e}\right \}}\sum_{ab}\left \langle \varphi |M_{a}^{x}\otimes  M _{b}^{y}\otimes M _{e=ab}|\varphi \right \rangle,\\
  \text{subject to } &p(ab|xy)=\left \langle \varphi |M _{a}^{x}\otimes  M _{b}^{y}\otimes \mathbb{I}|\varphi \right \rangle
  \end{align}
where the relation between randomness and non-locality can be considered from the perspective of nonlocal guessing games and the quantity gives the maximum probability that an adversary Eve's outcome $e$ matches the outcome $(a,b)$ for measurement choice $(x,y)$ over all possible quantum realizations, described by a tripartite quantum state $|\varphi  \rangle$ and measurements $M_{a}^{x}, M_{b}^{y}, M _{e}$ for devices $A,B,E$, compatible with the observed full joint probabilities distribution $P(ab|xy)$. The guessing probability $P_{\text{guess}}(A,B|E,x,y)$ is directly connected to the min-entropy $H_{\infty}(A,B|E,x,y)=-\log_{2}P_{\text{guess}}(A,B|E,x,y)$ representing the bits of randomness certified in this Bell scenario.

In a Bell setup for generating new random numbers, the user of a DIQRNG has no knowledge about the mechanisms of the device. Due to noise the quantum state used to generate random numbers may be not always maximally entangled.  We consider the randomness that can be extracted from complete measurement statistics maximally violating $C^{3}$ inequality in the presence of white noise. The correlations are of the form \begin{align}\label{ConsNPA}
\bm{p}=p\bm{q}+(1-p)\bm{r},
\end{align}
where $\bm{p}$ are the quantum correlations yielding the maximal value $3\sqrt{3}p$ from the Werner state, $\bm{q}$ are the
quantum correlations yielding the maximal $C^{3}$ inequality violation of $3\sqrt{3}$ and $\bm{r}$ denotes completely random correlations of the form $p(ab|xy)=\frac{1}{4}$ for all $a,b,x$ and $y$.  {We optimize the lower bounds of the global randomness based on the correlations defined as Eq.\eqref{ConsNPA} on Q1 and Q2 levels of NPA matrix. As shown in Fig. \ref{Rand1}, for a given noise $p$, the extractable randomness  using
the complete measurement statistics is more than that from an optimal violation of the $C^3$
inequality, both for Q1 and Q2 levels. It can be explained as the constraint with complete measurement statistics
is much tighter than knowledge of the $C^3$ violation alone, which  results in a higher lower bound of randomness.}  { Furthermore, it can be shown from Fig. $2$ that the randomness can be extracted in the $C^{3}$ scenario only provided that the $C^{3}$ inequality is violated, i.e., $p>4/(3\sqrt{3})\approx 0.7698$ and $1.1$ bits of randomness can be certified from the maximal violation of the $C^3$ inequality under  measurements setting $A_1B_1$ when $p=1$ as the worst case of the extractable randomness from Q2 level.}

\begin{figure}[ht]
 \centering
 \includegraphics[scale=0.7]{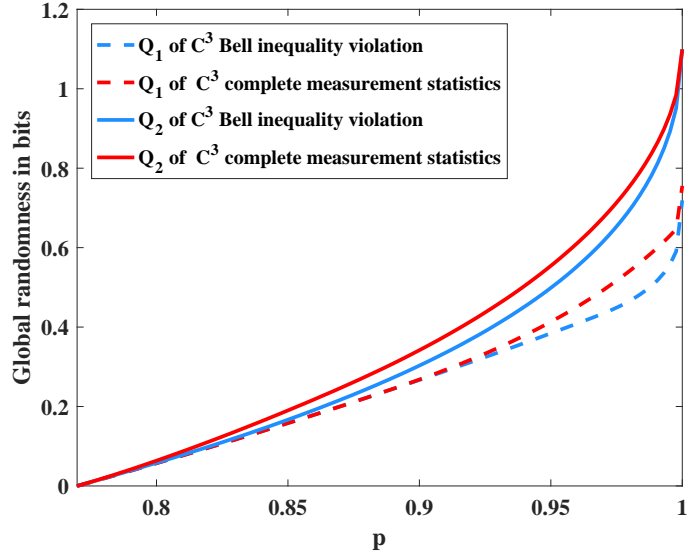}
  \caption{ {Lower bounds of DI global randomness under  measurement setting $A_1B_1$ via different constraint conditions. For a given $p$, the red color curves are obtained with the complete measurement statistics running on Q1 (dashed) and Q2 (solid) levels of NPA hierarchy; while the blue color curves are obtained by the   optimal  $C^{3}$ inequality violation $3\sqrt{3}p$ running on Q1 (dashed) and Q2 (solid) levels
of NPA hierarchy.}\label{Rand1}}
\end{figure}

Additionally, the self-testing protocols for  the chained Bell inequalities have been presented in Ref.\cite{ref14}, where the authors proved that the maximal violation of the chained Bell inequality uniquely certifies the state \eqref{2}  and the measurements \eqref{3}-\eqref{4}  up to a local isometry.
For joint measurement settings appearing in the chained Bell inequality, we have the following expectation values
\begin{equation}
\begin{aligned}
&\left \langle A_{k}B_{k} \right \rangle=\left \langle A_{k+1}B_{k} \right \rangle=cos(\frac{1}{2n}\pi ),~~k=1,...,n-1,\\
&\left \langle A_{n}B_{n} \right \rangle=-\left \langle A_{1}B_{n} \right \rangle=cos(\frac{1}{2n}\pi ).
\end{aligned}
\end{equation}
at the point of maximum violation. Hence, the maximum guessing probability for above settings are given by $(1+cos(\pi /2n))/4$ which is in consistent with  {the maximum global randomness of $1.1$ bits certified by $C^{3}$ inequality ($n=3$) under  measurements setting $A_1B_1$}. In particular, for an odd number of measurements, the expected values of the $n$ correlators that are not appeared in the chained Bell inequality, with respect to the measurement settings $(1,\frac{n+1}{2}),(2,\frac{n+3}{2}),\cdots,(\frac{n+1}{2},n),(\frac{n+3}{2},1),\cdots ,(n,\frac{n-1}{2})$, equal to $0$, which certify two bits of maximal randomness.

 {To investigate the quantum randomness generated by the quantum correlation, we explore the amount of randomness extracted in the chained Bell scenario with the measurement settings that certify maximum randomness. By adding the same noise defined in Eq.\eqref{ConsNPA} to the ideal quantum violation of Bell inequality, the lower bounds of randomness that can be extracted from the corresponding measurements settings for different noise $p$ are shown in Fig.\ref{Rand2}. We solve the SDP on the $1+AB$ level of NPA matrix for different values of $n$. CHSH Bell inequality enables one to extract the randomness from visibility $p\geq 1/\sqrt{2}\approx0.7071$, which is able to certify randomness from a more wider visibility range of $p$ than chained Bell inequality with $n\geq 3$. Since the optimal violation of the Werner state is $2npcos(\pi /2n)$, we can see that the range of randomness certified decreases as the number of measurements increases. While for a given $p$ approaching to 1, i.e., the source state closes to singlet state, the lower bounds of randomness for chained Bell inequality are higher than  CHSH inequality, which shows that chained inequalities have a better performance in randomness generation rates. In particular, we compare the robustness of chained Bell inequality with  the family of Bell inequalities $J_{\gamma }$ with $0\leq \gamma \leq \pi /12$ proposed by Wooltorton et al. \cite{ref6},
 \begin{align}
J_{\gamma }=\left \langle A_{0}B_{0} \right \rangle+(4cos^{2}\left [ \gamma +(\pi /6) \right ]-1)(\left \langle A_{0}B_{1} \right \rangle+\left \langle A_{1}B_{0} \right \rangle-\left \langle A_{1}B_{1} \right \rangle)
\end{align}
 where, for simplicity, we use Werner state and noiseless measurements relying on the choice of self-testing strategy and the randomness is optimized over different parameters $\gamma$ at each level of noise. As can be seen from Fig. \ref{Rand2}, there is a slight improvement in the DI randomness bound with three measurements at a low level of noise.}

\begin{figure}[http]
\label{figure2}
 \centering
 \setlength{\abovecaptionskip}{0.2cm}
\includegraphics[width=11cm,height=8.5cm]{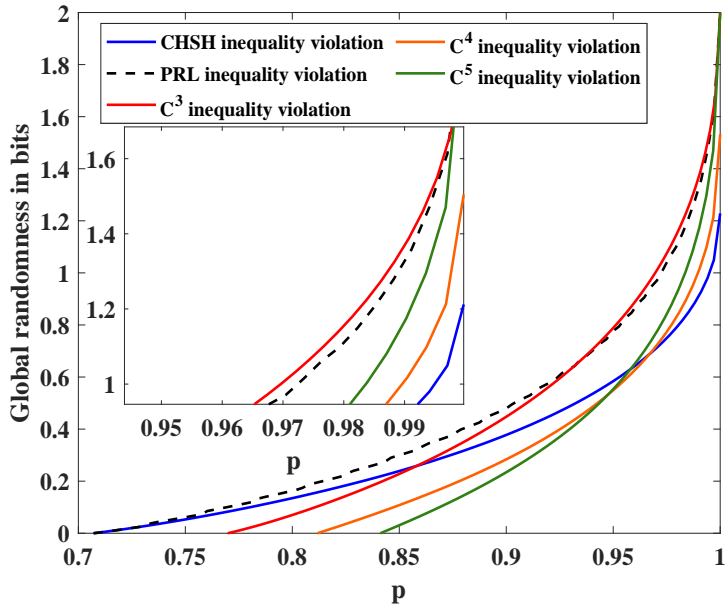}
  \caption{ {The comparisons between chained Bell inequality and previous work in term of lower bounds on DI randomness under corresponding measurement setting that certifies maximum randomness. The solid curves are obtained by the CHSH inequality violation (blue) and the chained inequality violation with $n=3$ (red),$n=4$ (orange) and $n=5$ (green). The dashed curves are obtained by the violation of the family of Bell inequalities $J_{\gamma }$, where  the randomness is optimized over different parameters $\gamma$ at each level of noise. All the curves were obtained using  1+AB level of NPA hierarchy. The inset is to  provide a clearer view of the differences between these curves.}}\label{Rand2}
\end{figure}

\section{4. Conclusions and Discussions}
In this work, we have introduced a method to obtain the tight upper bound of the maximum quantum value for chained Bell inequality with arbitrary measurement settings as a generalization of the CHSH Bell inequality. {We have investigated the constraints on the quantum state for the tightness of the upper bound and provided strong numerical evidence to support our claim.} In particular, the entanglement of the Werner state witnessed by the chained Bell inequality has been demonstrated. Bell inequalities have found applications not only in foundations of physics but also in protocols of DI quantum information processing tasks \cite{ref4} where the efficiency of the hardware is proportional to the violation of a Bell inequality, while the resource being used is entanglement. Thus, based on the tight upper bound we present the lower bounds on the device independent randomness with respect to the Werner states by using two methods of chained inequality optimal violation and the complete measurement statistics respectively.  { Moreover, we  present lower bounds on the randomness generation rates of chained Bell inequality for different $n$, which are compared with  the family of Bell inequalities $J_{\gamma }$ proposed by Wooltorton et al. \cite{ref6}. Interestingly, we find that chained Bell inequality with three measurement settings has certain advantages and could be used to improve randomness generation rates in practice, although the improvement that we get is modest at a low level of noise. Our results also show that using more measurements is meaningful beyond Bell scenarios with two binary measurement settings.}

 {Note that it is a significant view to show a tight bound by giving an explicit construction to attain the given randomness for the corresponding values of the chained Bell inequality. However, it is technically hard for chained inequality since the number of measurements in chained inequality is much more than in CHSH inequality, and we will leave it for future work.}

 {It would be interesting to further investigate analogous upper bounds in  bipartite scenarios by utilizing our method to extend to other Bell inequalities with many settings such as Gisin Bell inequality \cite{ref26}, including nonlinear Bell inequality based
on products of correlators \cite{ref27}. Finally, it is worth exploring whether our method can be extended to Bell scenarios with many settings and more  parties as well as steering scenarios.}

\section*{\bf Acknowledgments} This work is supported by the Shandong Provincial Natural Science Foundation for Quantum Science ZR2021LLZ002, No.ZR2020LLZ003, the Fundamental Research Funds for the Central Universities No.22CX03005A, the NSFC under Grant Nos. 12075159 and 12175147, Beijing Natural Science Foundation (Z190005), and the Academician Innovation Platform of Hainan Province.

\section*{\bf Data availability statement} All data generated or analyzed during this study are included in this published article.

\begin {thebibliography}{99}
\bibitem{ref1}J. S. Bell, On the Einstein Podolsky Rosen paradox, Phys. Phys. Fiz. 1, 195 (1964).
\bibitem{ref2}A. K. Ekert, Quantum Cryptography Based on Bell's Theorem, Phys. Rev. Lett. 67, 661 (1991).
\bibitem{ref3}J. Barrett, L. Hardy, and A. Kent, No Signalling and Quantum Key Distribution, Phys. Rev. Lett. 95, 010503 (2005).
\bibitem{ref4}A. Ac\'{i}n, N. Brunner, N. Gisin, S. Massar, S. Pironio, and V. Scarani, Device-Independent Security of Quantum CryptogRaphy Against Collective Attacks, Phys. Rev. Lett. 98, 230501 (2007).
\bibitem{ref5}S. Pironio, A. Ac\'{i}n, S. Massar, A. Boyer de La Giroday, D. Matsukevich, P. Maunz, S. Olmschenk, D. Hayes, L. Luo, T. Manning, and C. Monroe, Random Numbers Certified by Bell's Theorem. Nature (London) 464, 1021 (2010).
\bibitem{ref6}L. Wooltorton, P. Brown, and R. Colbeck, Tight Analytic Bound on the Trade-Off between Device-Independent Randomness and Nonlocality, Phys. Rev. Lett. 129, 150403(2022).

\bibitem{ref7}A. Coladangelo, K. T. Goh, and V. Scarani, All pure bipartite entangled states can be self-tested, Nat. Commun. 8, 15485
(2017).

\bibitem{ref8}F. Baccari, R. Augusiak, I. \v{s}upi\'{c}, J. Tura, and A. Ac\'{i}n, Scalable Bell Inequalities for Qubit Graph States and Robust Self-Testing, Phys. Rev. Lett. 124, 020402 (2020).
\bibitem{ref9}H. Buhrman, R. Cleve, S. Massar, and R. De Wolf, Nonlocality and communication complexity, Rev. Mod. Phys. 82, 665(2010).
\bibitem{ref10}J. F. Clauser, M. A. Horne, A. Shimony, and R. A. Holt, Proposed Experiment to Test Local Hidden-Variable Theories, Phys. Rev. Lett. 23, 880 (1969).
\bibitem{ref11} S. L. Braunstein and C. M. Caves, Wringing out better Bell
inequalities, Ann. Phys. (Amsterdam) 202, 22 (1990).
\bibitem{ref12}G. Vallone, G. Lima, E. S. Gomez, G. Canas, J. Larsson, P.Mataloni, and A. Cabello, Bell scenarios in which nonlocality and entanglement are inversely related, Phys. Rev. A 89, 012102 (2014).

\bibitem{ref13}B. G. Christensen, Y.-C. Liang, N. Brunner, N. Gisin, and P. G.
Kwiat, Exploring the Limits of Quantum Nonlocality with Entangled Photons, Phys. Rev. X 5, 041052 (2015).
\bibitem{ref14}I. \v{s}upi\'{c}, R. Augusiak, A. Salavrakos, and A. Ac\'{i}n, Self-testing protocols based on the chained bell inequalities, New J. Phys. 18, 035013 (2016).
\bibitem{ref15}T. R. Tan, Y. Wan, S. Erickson, P. Bierhorst, D. Kienzler, S. Glancy, E. Knill, D. Leibfried, and D. J. Wineland, Chained Bell Inequality Experiment with High-Efficiency Measurements, Phys. Rev. Lett. 118, 130403 (2017).

\bibitem{ref16}A. Ac\'{i}n and L. Masanes, Certified randomness in quantum physics, Nature (London) 540, 213 (2016)
\bibitem{ref17}A. Ac\'{i}n, S. Massar, and S. Pironio, Randomness Versus Nonlocality and Entanglement, Phys. Rev. Lett. 108, 100402 (2012).

\bibitem{ref18}M. Navascu\'{e}s, S. Pironio, A. Ac\'{i}n, A convergent hierarchy of semidefinite programs characterizing the set of quantum correlations New J. Phys. 10, 073013 (2008).
\bibitem{ref19}O. Nieto-Silleras, S. Pironio, and J. Silman, Using complete measurement statistics for optimal device-independent randomness evaluation, New J. Phys. 16, 013035 (2014).
\bibitem{ref20}J.-D. Bancal, L. Sheridan, and V. Scarani, More randomness from the same data, New J. Phys. 16, 033011 (2014).
\bibitem{ref21}S. M. Assad, O.Thearle, P. K. Lam, Maximizing device-independent randomness from a Bell experiment by optimizing the measurement settings. Phys. Rev. A 94, 012304 (2016).
\bibitem{ref22}J. Bowles, F. Baccari, and A. Salavrakos, Bounding sets of sequential quantum correlations and device-independent randomness certification, Quantum 4, 344 (2020).
\bibitem{ref23} S. Wehner, Tsirelson bounds for generalized Clauser-Horne-Shimony-Holt inequalities, Phys. Rev. A 73, 022110 (2006).
 {
\bibitem{ref24}M. Herrero-Collantes and J. C. Garcia-Escartin, Quantum random number generators, Rev. Mod. Phys. 89, 015004 (2017).}
\bibitem{ref25}S. G\'{o}mez, A. Mattar, E. S. G\'{o}mez, D. Cavalcanti, O. Jim\'{e}nez Far\'{i}as, A. Ac\'{i}n, and G. Lima, Experimental nonlocality-based randomness generation with nonprojective measurements, Phys. Rev. A 97, 040102(R) (2018).
 {
\bibitem{ref26}N. Gisin, Bell inequality for arbitrary many settings of the analyzers, Phys. Lett. A 260, 1 (1999).
\bibitem{ref27}A. Te'eni, B. Y. Peled, E. Cohen, and A. Carmi, Multiplicative Bell inequalities, Phys. Rev. A 99, 040102(R) (2019).}

\end{thebibliography}
\end{document}